\newtheorem{definition}{Definition}%[section]
\newtheorem{theorem}{Theorem}%[section]
\newtheorem{corollary}{Corollary}%[theorem]
\newcommand{\braket}[1]{\langle #1 \rangle}
\newcommand{\tr}{{\rm tr}}
\newcommand{\etal}{\emph{et al.}
}
\newcommand{\bra}[1]{\langle #1|}
\newcommand{\ket}[1]{|#1\rangle}
\newcommand{\bracket}[2]{\langle #1|#2\rangle}
\pgfplotsset{compat=1.17} 
\begin{document}
%\title{Initial-condition-independent maximal entanglement generation in quantum walk without asymptotic limit}%\date{\today}
\title{%
Maximal {coin-position} entanglement {generation in a quantum walk}\\
for the third step and beyond
regardless of the initial state}
\author{Xiao-Xu Fang}
\affiliation{School of Physics, State Key Laboratory of Crystal Materials, Shandong University, Jinan 250100, China}

\author{Kui An}
\affiliation{School of Physics, State Key Laboratory of Crystal Materials, Shandong University, Jinan 250100, China}

\author{Bai-Tao Zhang}
\affiliation{State Key Laboratory of Crystal Materials, Institute of Novel Semiconductors, Shandong University, Jinan 250100, China}

\author{Barry C. Sanders} 
\affiliation{Institute for Quantum Science and Technology, University of Calgary, Alberta, Canada T2N 1N4}

\author{He Lu}
\email{luhe@sdu.edu.cn}
\affiliation{School of Physics, State Key Laboratory of Crystal Materials, Shandong University, Jinan 250100, China}
\affiliation{Shenzhen Research Institute of Shandong University, Shenzhen 518057, China}

\begin{abstract}
We study maximal {coin-position} entanglement generation via a discrete-time quantum walk, in which the coin operation is randomly selected from one of two coin operators set at each step. We solve maximal entanglement generation as an optimization problem with quantum process fidelity as the cost function. Then we determine {the maximal entanglement 
that can be rigorously generated for any step beyond the second regardless of initial condition with appropriate coin sequences. The simplest coin sequence comprising} Hadamard and identity operations {is equivalent to the generalized elephant quantum walk, which exhibits an increasingly faster spreading in terms of probability distribution.} Experimentally, we demonstrate a ten-step quantum walk {driven by} such coin sequences {with linear optics,} and thereby show the desired high-dimensional bipartite entanglement {as well as the transport behavior of faster spreading.}
\end{abstract}

\maketitle

A quantum walk (QW) is the quantum version of a classical random walk~\cite{Aharonov1993PRA,Kempe2003introductory}.
Due to the principle of superposition in quantum mechanics, 
a QW gives rise to impressive applications in quantum information science, from quantum computing~\cite{Childs2004PRA,Childs2009PRL,Lovett2010PRA,Childs2013Science} to quantum simulation~\cite{Schreiber2012Science2DQW}, and from implementing quantum measurement~\cite{Kurzy2013PhysRevLettGeneralizedMeasuring,Bian2015PhysRevLettPOVM,Zhao2015PhysRevA} 
to exploring topological phases~\cite{Kitagawa2010PhysRevAExploring,2012PhysRevB,Kitagawa2012Observation,xiao2017observation,Wang2018PhysRevA,Wang2019PhysRevLett,Xu2019PhysRevResearch,WZS19}.
For the discrete-time QW~(DTQW), entanglement can be generated between {coin and position} degree of freedom of the walker, {so called the coin-position entanglement}~\cite{Carneiro2005Entanglement,Abal2006PRA,Annabestani2010JPAMT,Franco2011PRL}, which is a key resource for quantum information processing~\cite{Horodecki2009RevModPhys}. {The entangled states generated in DTQW are generally high-dimensional quantum states ($2\otimes d$) that exhibit contents richer than those of qubit states ($2\otimes 2$)~\cite{Erhard2020}. Thus, DTQW provides an experimental platform to investigate quantum correlations of $2\otimes d$ quantum states in terms of separability and entanglement detection~\cite{Ha2005PhysRevA,Zhao2011PhysRevA_Endetection,Chen2012PhysRevA,Johnston2013PhysRevA,Shen2020PhysRevA}}, entanglement of formation~\cite{Gerjuoy2003PhysRevA,Lastra2012PhysRevA}, survival of entanglement~\cite{Dajka2008PhysRevA,Giordani2021NJP},  concurrence~\cite{Mintert2004PhysRevLett,Chen2005PhysRevLett,Zhao2011PhysRevA} and discord~\cite{Sai2012,Girolami2012PhysRevLett,Ma2015SR}. 

In a {one-dimensional} (1D) DTQW with static coin operations (unchanging coin operation during evolution), 
entanglement generation depends on the initial coin state and cannot reach the maximal value~\cite{Carneiro2005Entanglement,Abal2006PRA}.
Counterintuitively, by introducing disorder into the DTQW~\cite{Chandrashekar2013ArXiv}, e.g., randomly choosing SU(2) coin operation
\begin{equation}\label{Eq:SU2coin}
\hat{C}(\xi,\gamma,\zeta)
=\begin{pmatrix}\text{e}^{\text{i}\xi}\cos\gamma&\text{e}^{\text{i}\zeta}\sin\gamma\\\text{e}^{-i\zeta}\sin\gamma&-\text{e}^{-i\xi}\cos\gamma\end{pmatrix},
4\gamma,\xi,\zeta\in[0, 2\pi],
\end{equation}
at each step,
generated entanglement is significantly enhanced and achieves maximal entanglement generation~(MEG) asymptotically
independent of initial conditions~\cite{Vieira2013PhysRevLett,Vieira2014PhysRevA}. 
Motivated by robust entanglement generation under experimental conditions with imperfections and disorder,
random-coin DTQWs have been theoretically studied for various disorder configurations~\cite{Salimi2012,Rohde2013PRA,Montero2016PRA,Molfetta2016,Zeng2017,Orthey2019,Singh2019,Buarque2019PhysRevE,Pires2021,Laneve2021PhysRevResearch}
and have been experimentally observed with linear optics~\cite{Wang2018Optica,Tao2021OL,Zhang2022PRA}. {The enhancement of entanglement is not limited to the disorder in coin operation. Introducing the disorder in the shift operator can enhance the coin-position entanglement generation as well~\cite{Pires2019SR,SEN2020,Pires2020PRE_shiftdisorder,Naves2022PRA}. Besides, phenomena of entanglement boosting also exist in the quantum Parrondo walk~\cite{Pires2020PRE,Munsif2020AQT,Walczak2021PRE,Walczak2022PRE_threecoin} }

{The MEG with a fixed initial coin state can be obtained either in the asymptotic approach~\cite{Pires2019SR,SEN2020} or with specifically designed coin-operation sequences~(henceforth called a coin sequence)~\cite{Gratsea_2020_Generation_of_hybrid,Tao2021OL,Zhang2022PRA}.
MEG, regardless of initial coin states, is generally achieved in an asymptotic approach via QW with disorder either in coin operations~\cite{Vieira2013PhysRevLett} or in shift operations~\cite{Naves2022PRA}}, which is problematic for current experimental technologies.
Strategies to optimize coin sequences during the evolution have been proposed aiming at MEG for few steps. Universal coin sequences are proposed to generated highly entangled states for fewer than ten steps~\cite{Gratsea_2020_Universal_optimal_coinSequences}. However, the universal sequence works for an odd number of steps and for the states with vanishing relative phase. Parrondo sequences have been proposed to generate maximal entanglement at steps $T=3$ and $T=5$~\cite{Dinesh2022}. 
Ideal MEG via a DTQW should work for any step number \emph{and} independent of initial conditions,
but previous experiments have achieved either one or the other, not both;
we achieve both simultaneously here
for all steps beyond the second
by solving an optimization problem. {Interestingly, the determined optimal coin sequences are equivalent to the generalized elephant quantum walk~(gEQW)~\cite{Pires2019SR,Naves2022PRA}, in which the spreading of the probability distribution is much faster.}
Experimentally, we demonstrate the DTQW with requisite coin sequences up to ten steps with linear optics
and observe significant enhancement of entanglement generation {as well as spreading behavior} compared to other schemes.
\begin{figure*}[ht!]%4.25
 	\includegraphics[width=\linewidth]{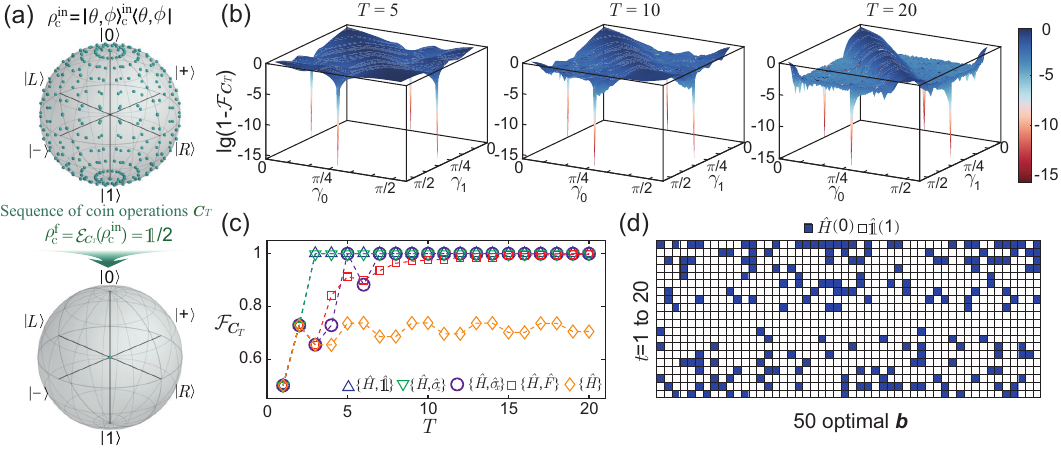} 
 	\caption{(a) The geometric representation of the optimal coin sequence $\bm{C}_T$ that can generate maximal entanglement irrelevant of initial state $\ket{\theta,\phi}_\text{c}^{\text{in}}$. (b) Results of {optimization of $1-\mathcal F_{\bm{C}_T}$ with $\hat{C}_t\in\{\hat{C}(\gamma_0), \hat{C}(\gamma_1)\}$} at $T=5$, $T=10$ and $T=20$. {The values of $\gamma_(0,1)$ are taken from 0$^\circ$ to 90$^\circ$ with an interval of 1$^\circ$.} (c) The maximal $\mathcal F_{\bm{C}_T}$ in a QW from $T=1$ to $T=20$ {with coin set $\{\hat{H}, \hat{\mathds1}\}$ (blue up‑pointing triangle), $\{\hat{H}, \hat{\sigma}_z\}$ (green down‑pointing triangle), $\{\hat{H}, \hat{\sigma}_x\}$ (purple circle), $\{\hat{H}, \hat{F}\}$ (red square) and $\{\hat{H}\}$ (yellow diamond)}.  (d) Fifty $\bm b$s {that can achieve $\mathcal F_{\bm{C}_T}=1$} at $T=20$ with coin set $\{\hat{H}, \hat{\mathds1}\}$ where 0 represents $\hat{H}$ and 1 represents $\hat{\mathds1}\}$.} 
 	\label{Fig:conceptandoptimazation}
 \end{figure*}
 
In the 1D DTQW, the Hilbert space of coin (c) and position (p) of the walker is $\mathscr{H}
    =\mathscr{H}_\text{c}\otimes\mathscr{H}_{\text{p}}$
with
\begin{equation}
\label{eq:Hcw}
    \mathscr{H}_\text{c}=\text{span}\{\ket0_\text{c},\ket1_\text{c}\},\,
    \mathscr{H}_\text{p}=\text{span}\{\ket{x};
x \in \mathbb{Z}\}.
\end{equation}

The walker is initially localized {in position state $\ket0_\text{p}$},
with arbitrary initial coin 
$\ket{\theta,\phi}_\text{c}^{\text{in}}=\cos(\nicefrac\theta2)\ket0_\text{c}+\text{e}^{\text{i}\phi}\sin(\nicefrac\theta2)\ket1_\text{c}$,
where $2\theta,\phi \in [0,2\pi]$. At step~$t$, the coin operator $\hat{C}_t$ is applied.
Then the walker moves left or right {conditioned on the coin state by}
\begin{equation}
\label{eq:shiftoperator}
\hat{S}=\sum_{x}\ket{x+1}_\text{p}\bra{x}\otimes\ket0_\text{c}\bra{0}+\ket{x-1}_\text{p}\bra{x}\otimes\ket1_\text{c}\bra{1},
\end{equation}
which is independent of~$t$.

\begin{figure*}[htb]%4.25
 	\includegraphics[width=\linewidth]{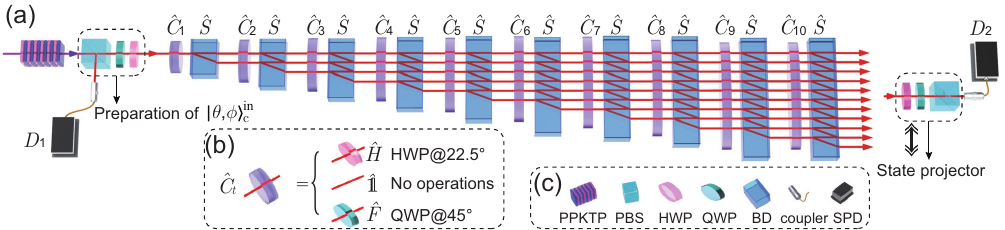} 
 	\caption{(a) Detailed sketch of the setup to realize the ten-step DTQW. (b) Coin operations realized in experiment. A half-wave plate~(HWP) set at 22.5$^\circ$ corresponds to operation $\hat{H}$ and a quarter-wave plate~(QWP) set at 45$^\circ$ corresponds to operation $\hat{F}$. No waveplate needs to be arranged if the operation is $\hat{\mathds1}$. (c) Symbols used in panels(a) and (b): periodically poled potassium titanyl phosphate (PPKTP), polarization beam splitter (PBS), half-wave plate (HWP), quarter-wave plate (QWP), beam displacer (BD), and single photon detector (SPD).} 
 	\label{Fig:Setup}
 \end{figure*} 

For $t\in[N]=\{1,\ldots,N\}$,
the evolution is
\begin{equation}
\label{eq:evolution}
\ket{\theta,\phi}_{\text{f}}
=\prod_{t\in[T]}\hat{U}_t\ket{\theta,\phi}_\text{c}^{\text{in}}\otimes\ket0_{\text{p}},\,
\hat{U}_t=\hat{S}(\hat{C}_t\otimes\hat{\mathds1}_{\text{p}}),
\end{equation}
where $\hat{\mathds1}_{\text{p}}=\sum_{x}\ket{x}\bra{x}$
is the identity operator on $\mathscr{H}_{\text{p}}$,
and~"f" is short for "final".
The sequence
$\bm{C}_T=\left(\hat{C}_t\right)_{t\in[T]}$
describes coin operations applied to the walker. 

Achieving {coin-position} MEG at step~$T$ regardless of $\ket{\theta,\phi}_\text{c}^{\text{in}}$ corresponds to designing $\bm{C}_T$ that maps any $\ket{\theta,\phi}_\text{c}^{\text{in}}\otimes\ket0_{\text{p}}$ to the maximally entangled {coin-position} state $\ket{\theta,\phi}_{\text{f}}$.
Entanglement of $\ket{\theta,\phi}_{\text{f}}$ 
is quantified by the {Von Neumann} entropy
\begin{equation}
\mathcal S_{\text{E}}(\ket{\theta,\phi}_\text{f})
=-\tr\left(\rho_\text{c}^{\text{f}} \log_2 \rho_\text{c}^{\text{f}}\right)
=-\sum_{\varepsilon\in\pm}
\lambda_
\varepsilon\log_2\lambda_\varepsilon
\label{Eq:vonneumannentropy}
\end{equation} 
of the reduced coin state~\cite{Abal2006PRA,Bennett1996PhysRevA}
$\rho_\text{c}^{\text{f}}=\tr_{\text{p}}(\ket{\theta,\phi}_\text{f}\bra{\theta,\phi})$ and $\lambda_{\pm}$ are the eigenvalues of $\rho_\text{c}^{\text{f}}$.
Note that $0\leq \mathcal S_{\text{E}}\leq 1$, and $\mathcal S_{\text{E}}\equiv0$ for separable states and~1 for maximally entangled states.

Thus, MEG evolution~(\ref{eq:evolution}) yields maximally entangled $\ket{\theta,\phi}_{\text{f}}$,
which is equivalent to  $\mathcal E_{\bm{C}_T}(\rho_\text{c}^{\text{in}}=\ket{\theta,\phi}_\text{c}^{\text{in}}\bra{\theta,\phi})=\nicefrac{\mathds1}{2}$ in~$\mathscr{H}_\text{c}$, where $\mathcal E_{\bm{C}_T}$ is a completely-positive linear map determined by $\bm{C}_T$.
A geometric illustration of~$\mathcal E_{\bm{C}_T}$ for MEG
is in~\cref{Fig:conceptandoptimazation}(a), which is the depolarizing channel $\mathcal E^{\text{DP}}(\rho_\text{c}^{\text{in}})=(1-\eta)\rho_\text{c}^{\text{in}}+\eta\nicefrac{\mathds1}{2}$ with $\eta=1$~\cite{Nielsen2011Quantum}.
Process fidelity
{$\mathcal F_{\bm{C}_T}
=\tr\left(\sqrt{\sqrt{\chi_{\bm{C}_T}}\chi_{\text{DP}}\sqrt{\chi_{\bm{C}_T}}}\right)^2$~\cite{Bongioanni2010PRA,Wilde2017}},
is our figure of merit to design $\bm{C}_T$,
where $\chi_{\bm{C}_T}$ is the Pauli-matrix representation of the quantum channel $\mathcal E_{\bm{C}_T}$.
Note that $\mathcal F_{\bm{C}_T}=1$ indicates MEG at step~$T$ regardless of $\ket{\theta,\phi}_\text{c}^{\text{in}}$, and we refer to the corresponding coin sequence $\bm{C}_T$ as the \emph{optimal} coin sequence. {In this sense, the design of optimal $\bm{C}_T$ can be addressed by solving the optimization problem
\begin{equation}
\label{Eq:opt}
\begin{split}
\text{maximize}\hspace{1.2cm}&\mathcal F_{\bm{C}_T}
=\tr\left(\sqrt{\sqrt{\chi_{\bm{C}_T}}\chi_{\text{DP}}\sqrt{\chi_{\bm{C}_T}}}\right)^2\\
\text{subject to}\hspace{1.2cm}&\hat{C}_t\in \text{SU}(2).
    \end{split}
\end{equation}
A general SU(2) coin operation in~\cref{Eq:SU2coin} has three parameters, which makes the optimization rather resource demanding. To simplify the optimization, we }replace
$\hat{C}(\gamma)\gets\hat{C}(0,\gamma,0)$.
Furthermore, we restrict construction of~$\bm{C}_T$
by allowing only two coin operations, i.e., $\gamma_{0,1}$
labeled by one bit with values~0 and~1.
{Then the optimization~\cref{Eq:opt} converts to
\begin{equation}\label{Eq:opt_simplified}
\begin{split}
\text{maximize}\hspace{1.2cm}&\mathcal F_{\bm{C}_T}
=\tr\left(\sqrt{\sqrt{\chi_{\bm{C}_T}}\chi_{\text{DP}}\sqrt{\chi_{\bm{C}_T}}}\right)^2\\
\text{subject to}\hspace{1.2cm}&\hat{C}_t\in \{\hat{C}(\gamma_0), \hat{C}(\gamma_1)\}, \gamma_{0,1}\in [0, \nicefrac{\pi}{2}].
    \end{split}
\end{equation}
  }

We solve optimization~(\cref{Eq:opt_simplified}) using an annealing algorithm.
The results of optimization of $\gamma_{0,1}$
at $T\in\{5,10,20\}$
are shown in~\cref{Fig:conceptandoptimazation}(b).
Evidently, the minimal $1-\mathcal F_{\bm{C}_T}$
is obtained for two coin sets:
$\{\hat{C}(0), \hat{C}(\nicefrac{\pi}4)\}$ and $\{\hat{C}(\nicefrac{\pi}2), \hat{C}(\nicefrac{\pi}4)\}$. Note that $\hat{C}(0)=\hat{\sigma}_z$, $\hat{C}(\nicefrac{\pi}4)=\hat{H}$ and $\hat{C}(\nicefrac{\pi}2)=\hat{\sigma}_x$.
The evolution unitary operator with the coin operator $\hat{\sigma}_z$, i.e.,  $\hat{U}=\hat{S}\hat{\sigma}_z$,
makes the components~$\ket0_\text{c}$ and~$\ket1_\text{c}$ propagate in the opposite direction without interference, which has the similar effect of $\hat{U}=\hat{S}\hat{\mathds1}$. {The difference is that $\hat{\sigma}_z$ delivers a phase $\pi$ ($\text{e}^{\text{i}\pi}=-1$) on component~$\ket1_\text{c}$ while $\mathds 1$ delivers zero phase ($\text{e}^{\text{i}0}=1$), which does not affect the amount of entanglement of the final state. Along this spirit, we conjecture that the coin set $\{\hat{H}, \hat{\mathds1}\}$ is as effective as $\{\hat{H}, \hat{\sigma_z}\}$ in terms of MEG. To confirm this conjecture, we solve~\cref{Eq:opt_simplified} by restricting $\hat{C}_t\in\{\hat{H}, \hat{\mathds1}\}$ and $\hat{C}_t\in\{\hat{H}, \hat{\sigma_z}\}$ respectively, and the results of optimized $\mathcal F_{\bm{C}_T}$ with~$T$ up to~20 are shown with blue up-pointing triangles and green down-pointing triangles in~\cref{Fig:conceptandoptimazation}(c).

We observe that optimized $\mathcal F_{\bm{C}_T}$ values with these two coin sets are exactly same, in which $\mathcal F_{\bm{C}_T}=1$ since step $T=3$.} To give a comparison, we also show the optimized $\mathcal F_{\bm{C}_T}$ with coin sets $\{\hat{H}, \hat{\sigma}_x\}$ and $\{\hat{H}\}$. {We also consider the coin set of $\{\hat{H}, \hat{F}\}$ with} $\hat{F}=[1,\text{i};\text{i}, 1]/\sqrt{2}$ {being the Kempe coin operator~\cite{Kempe2003introductory}, which is widely adopted in the investigation of entanglement generation in discorded QW~\cite{Vieira2013PhysRevLett,Wang2018Optica,Orthey2019,Gratsea_2020_Universal_optimal_coinSequences}. As shown in~\cref{Fig:conceptandoptimazation}(c),} $\mathcal F_{\bm{C}_T}=1$ is achieved at step $T=5$ and $T\geq 7$ for the coin set $\{\hat{H}, \hat{\sigma}_x\}$ {(purple circles)}.
Asymptomatic behavior is observed with the coin set $\{\hat{H}, \hat{F}\}$ {(red squares)} and oscillating behavior is observed in the Hadamard walk~{(yellow diamonds)}. {The optimized $\mathcal F_{\bm{C}_T}$ is associated with a bit string $\bm b\in\{0,1\}^T$ of length~$T$ with~$0$ labeling~$C(\gamma_0)$ and~$1$ labeling~$C(\gamma_1)$.} We note that optimal~$\bm b$ at step~$T$ is not unique. For instance, we obtain 1104 optimal~$\bm b$ with the coin set $\{\hat{H}, \hat{\mathds1}\}$ at $T=20$, and we list 50 among them in~\cref{Fig:conceptandoptimazation}(d).
There are no obvious features of regularities and generalities of these optimal~$\bm b's$.
An optimal~$\bm b$ containing~0 ($\hat{H}$) as little as possible is preferred in experiment. {Considering the spreading behavior with the optimal coin sequences, we experimentally choose the optimal~$\bm b$ containing two or three~0s in our realization} (The explicit form of $\bm b$ and its corresponding proof are given in~\cref{App:optimalsequence}). {Note that the optimal $\bm b$ generally guarantees the MEG at step $T$. However, there indeed exists optimal $\bm b$, such as the sequences in~\cref{App:optimalsequence}, which leads to MEG at intermediate steps as well.}

\begin{figure*}[ht!bp]%4.25
 \includegraphics[width=\linewidth]{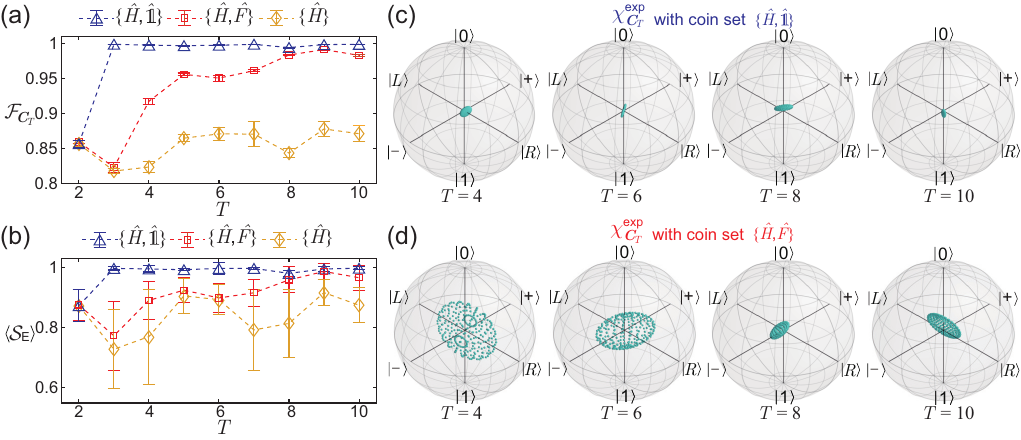}
\caption{%
(a) Experimental results of $\mathcal F_{\bm{C}_T}$
 with coin sets $\{\hat{H}, \hat{\mathds1}\}$ {(blue up-pointing triangles)}, $\{\hat{H},\hat{F}\}$ {(red squares)} and $\{\hat{H}\}$ {(yellow diamonds)} at $T=2$ to $T=10$.
(b) Average entanglement $\langle\mathcal{S}_\text{E}\rangle$ over 296 initial coin states with the reconstructed
$\chi^\text{exp}_{\bm{C}_T}$.
(c) Geometric representation of reconstructed $\chi^{\text{exp}}_{\bm{C}_T}$ with the coin set $\{\hat{H}, \hat{\mathds1}\}$ at $T=4,6,8$ and $10$.
(d) Geometric representation of reconstructed $\chi^{\text{exp}}_{\bm{C}_T}$
with the coin set
$\{\hat{H},\hat{F}\}$
at $T=4,6,8$ and $10$.}
\label{Fig:Geometricplot}
\end{figure*}

{In fact, the disorder QW with the coin set $\{\hat{H}, \hat{\mathds1}\}$ is equivalent to the gEQW~\cite{Pires2019SR,Naves2022PRA}, which is a QW with disorder in the shift operator. In gEQW, the coin operator is step-independent (static coin), and the shift operator is step-dependent according to
\begin{equation}\label{Eq:shiftoperator_gEQW}
\begin{split}
&\hat{S}_{\text{gEQW}}(\Delta_t)\\
&=\sum_{x}\ket{x+\Delta_t}_\text{p}\bra{x}\otimes\ket0_\text{c}\bra{0}+\ket{x-\Delta_t}_\text{p}\bra{x}\otimes\ket1_\text{c}\bra{1}
\end{split}
\end{equation} 
with $\Delta_t\in[1, 2, \ldots, T]$. The probability distribution of $\Delta_t$ is a discretized version of the $q$-exponential distribution~\cite{Tsallis2009nonadditive}. As a consequence, the case of $q=0.5$ corresponds to a standard Hadamard QW while for $q\to\infty$ the shift operation~\cref{Eq:shiftoperator_gEQW} becomes completely disordered.

In our mode, the evolution of the $l$-step QW with a coin sequence comprised of a single Hadamard operation followed by $l-1$ identity operations corresponds to a one step gEQW with $\Delta_t=l$
\begin{equation}
\prod_{\bm b=01\cdots1}\hat{S}(\hat{C}_t\otimes\mathds1_\text{p})=\hat{S}_{\text{gEQW}}(\Delta_1=l)\hat{H}\otimes\mathds1_\text{p}.
\end{equation}
For instances, the optimal $\bm b$ in the first column of~\cref{Fig:conceptandoptimazation}(d) is $\bm b=01111111111010111011$, which corresponds to a four-step gEQW with $\Delta_1=11$, $\Delta_2=2$, $\Delta_3=4$ and $\Delta_4=3$ (See~\cref{App:gEQW} for more details). Compared to the QW with disorder in the coin operation, the gEQW exhibits a faster spreading while maintaining the capability of asymptotic MEG~\cite{Pires2019SR,Naves2022PRA}.   }

{We implement the 1D DTQW with the well-established dynamical evolution of single photon in linear optical network~\cite{Broome2010PRL,Xue2014NJP,Xue2015PRL}.} The experimental setup is shown in~\cref{Fig:Setup}(a). The coin state is encoded in the photon's polarization degree of freedom by $\ket{H(V)}=\ket{0(1)}$, where $\ket{H(V)}$ denotes the horizontal (vertical) polarization.
The position state is encoded in the photon's spatial degree of freedom, i.e., the transverse spatial modes.
Two photons in state $\ket{H}\ket{V}$ with a central wavelength at 810~nm are generated from a periodically poled potassium titanyl phosphate (PPKTP) crystal pumped by an ultraviolet {continuous-wave} laser diode with the central wavelength at 405~nm~\cite{Li2021PRR,Ding2021PRR,Zhang2021PRL}.

{During our experiment, the count rate of two-photon coincidences is about $2.8\times10^4$/s with a pump power of 10mW.} The two photons are then separated by a polarizing beam splitter (PBS), which transmits the horizontal polarization and reflects vertical polarization.
The reflected photon is detected by a single-photon detector (SPD) to serve as a trigger. The transmitted photon is sent into the photonic network consisting of waveplates and birefringent calcite beam displacers (BDs), in which the longitudinal spatial mode of the injected photon is denoted as the start position of the walker $\ket0_\text{p}$. The coin operations $\hat{C}_t$ are realized by waveplates which rotates the polarization of the photon, and the BD transmits the vertical polarization while deviating from the horizontal polarization so that the BD acts as the shift operation $\hat{S}$.

{By carefully adjusting the position between any pair of two BDs, we observe an average interference visibility beyond 0.99. Note that if no waveplate is set between two BDs, the concatenation of two BDs corresponds to the shift operator~(\ref{Eq:shiftoperator_gEQW}) with $\Delta_t=2$. The outgoing state is detected by a state analyzer as shown in~\cref{Fig:Setup}(a). The projective measurement on the position state $\ket{x}$ is achieved by placing a SPD at the corresponding output mode of the optical network, and the projective measurement on an arbitrary coin state is implemented by a HWP, a QWP and a PBS. } 

\begin{figure*}[htbp]%4.25
\includegraphics[width=\linewidth]{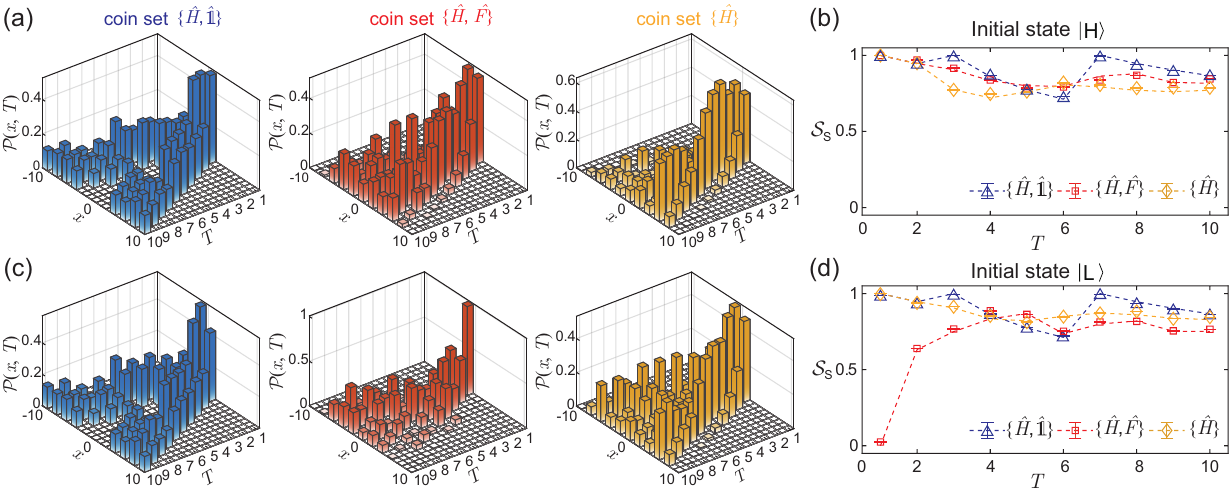} 
\caption{%
Measured $\mathcal P(x,T)$ with initial coin states (a) $\ket{H}$ and (c) $\ket{L}$ {driven by the coin sequences with coin sets $\{\hat{H}, \hat{\mathds1}\}$, $\{\hat{H}, \hat{F}\}$ and $\{\hat{H}\}$, respectively.} The calculated $\mathcal S_\text{S}(T)$ with measured $\mathcal P(x,T)$ of input state (b) $\ket{H}$ and (d) $\ket{L}$, {where the results of coin sets $\{\hat{H}, \hat{\mathds1}\}$, $\{\hat{H}, \hat{F}\}$ and $\{\hat{H}\}$ are shown with blue up-pointing triangles, red squares and yellow diamonds respectively.}} 
\label{Fig:spread}
\end{figure*}

To reconstruct the process matrix $\chi^{\text{exp}}_{\bm{C}_T}$,
we prepare four states as initial coin states $\ket{\theta,\phi}_\text{c}^{\text{in}}$, i.e., $\ket{H}$, $\ket{V}$, $\ket{+}\left(1/\sqrt{2}\left(\ket{0}+\ket{1}\right)\right)$ and $\ket{L}\left(1/\sqrt{2}\left(\ket{0}+i\ket{1}\right)\right)$. For each step~$T$, we set the the optimal coin sequence $\bm{C}_T$ accordingly (see~\cref{App:optimalsequence} for the settings of coin sequences), and we reconstruct $\rho_\text{c}^{\text{f}}$ using quantum state tomographic technology~\cite{Nielsen2011Quantum}. {To this end, we first set the measurement apparatus at one mode of the output of the optical network, and we perform the projective measurement on coin states $\ket{H}, \ket{V}, \ket{+}$ and $\ket{L}$ respectively. Then we move the measurement apparatus to the next optical mode and repeat the process of projective measurements aforementioned.}

{After collecting the data over all optical modes, we put the data together to perform quantum state tomography without distinguishing which mode they come from, which corresponds to trace out of position DOF. Roughly $2.2\times10^5$ two-photon coincidences are collected to perform process tomography at each step.} The experimental results of $\mathcal F_{\bm{C}_T}$ with the coin set $\{\hat{H}, \hat{\mathds1}\}$ are shown with blue triangles in~\cref{Fig:Geometricplot}(a).
We observe that the average $\mathcal F_{\bm{C}_T}$ from $T=3$ to $T=10$ is $0.9954\pm0.0008$, which is much better than the results with the coin set $\{\hat{H}, \hat{F}\}$ as shown with red squares. 
For the Hadamard QW, $\mathcal F_{\bm{C}_T}<0.8$
and oscillates as~$T$ increases (shown with yellow diamonds). We calculate the average entanglement $\langle\mathcal{S}_\text{E}\rangle$ over 296 initial coin states with the reconstructed $\chi^\text{exp}_{\bm{C}_T}$, and the results are shown in~\cref{Fig:Geometricplot}(b). The error bar indicates initial-state-independence, and we observe a stronger initial-state-independence with the coin set $\{\hat{H}, \hat{\mathds1}\}$ with the other two. This is also reflected by the geometric interpretations of $\mathcal F_{\bm{C}_T}$ as shown in~\cref{Fig:Geometricplot}(c) (coin set $\{\hat{H}, \hat{\mathds1}\}$) and~\cref{Fig:Geometricplot}(d) (coin set $\{\hat{H}, \hat{F}\}$) at $T=4, 6, 8$ and $10$, respectively. It is obviously that the results with $\{\hat{H}, \hat{\mathds1}\}$ are much more dense than the results with $\{\hat{H}, \hat{F}\}$, which indicates the entanglement generation with the coin set $\{\hat{H}, \hat{\mathds1}\}$ has stronger independence of the initial coin states stronger than that with the other two. More details are shown in~\cref{App:processodd}. 

\begin{figure*}[htbp!]
\includegraphics[width=1\linewidth]{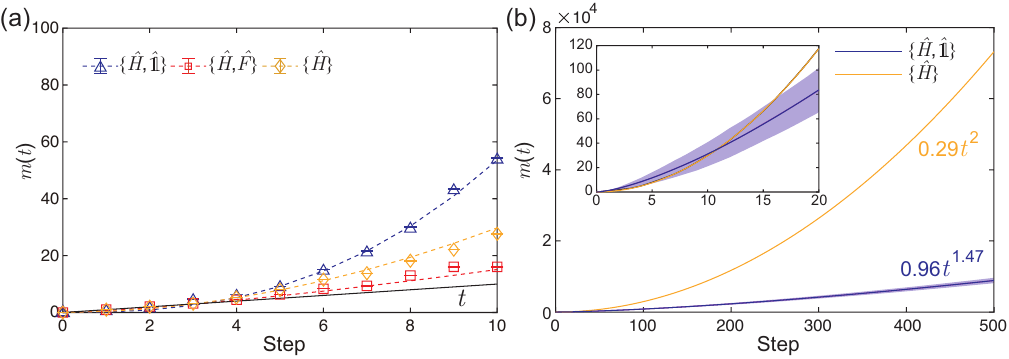} 
\caption{(a) The experimental results of the second moment $m(t)$ in a ten-step QW. {The blue up-pointing triangles represent the results with the coin set $\{\hat{H},\hat{\mathds1}\}$. The red squares and yellow diamonds represent the results with the coin set $\{\hat{H},\hat{F}\}$ and $\{\hat{H}\}$, respectively}. The black line is   $m(t)=t$. (b) The simulated 500-step QW with the coin sets $\{\hat{H},\hat{\mathds1}\}$ and $\{\hat{H}\}$. For QW with the coin set $\{\hat{H},\hat{\mathds1}\}$, the coin operation is randomly selected from the coin set $\{\hat{H},\hat{\mathds1}\}$ at etch step. The average of $m(t)=t$ (blue line) is calculated on a sample of 1000 different $\bm{C}_T$ values, and the blue shade corresponds to the standard deviation of $m(t)=t$. The insert is $m(t)=t$ from $t=1$ to $t=20$.}
\label{Fig:smoent_exp}
\end{figure*}

{We investigate the spreading properties of  the demonstrated QW. We first }investigate the uniformity of the probability distribution $\mathcal P(x, T)$ {at step~$T$, which} can be characterized by the normalized Shannon entropy 
\begin{equation}
\mathcal S_\text{S}(T)=\frac{-\sum_x \mathcal P(x, T) \ln\mathcal P(x, T)}{\ln(T+1)}, 
\end{equation}  
with $\nicefrac1{\ln(T+1)}$ being the normalization parameter.
The walker is able to occupy $T+1$ positions after~$t$ steps so that the maximal value of $-\sum_x \mathcal P(x,T) \ln\mathcal P(x, T)$ is $\ln (T+1)$, which corresponds to the uniform distribution over $T+1$ positions~\cite{2020_martin-vazquez_Phys.Rev.A_Optimizing}.
Larger $\mathcal S_\text{S}(T)$ implies $\mathcal P(x, T)$ is more uniform. For a $T$-step QW associated with the corresponding optimal $\bm{C}_T$, we measure the probability distribution~$\mathcal P(x,T)$ at step~$T$, according to which we calculate the normalized Shannon's entropy $\mathcal S_\text{S}(T)$. The results of $\mathcal P(x,T)$ with the initial coin states $\ket{H}$ and $\ket{L}$ are shown in~\cref{Fig:spread}(a,c), and the corresponding $\mathcal S_\text{S}(T)$ values are shown in~\cref{Fig:spread}(b,d) respectively. Compared with the other two cases, the uniformity of the QW with the coin set $\{\hat{H}, \hat{\mathds1}\}$ is enhanced at $T=3$ and $T=7$.  We also investigate the trend of probability distributions, which can be indicated by the second moment of the walker
\begin{equation}
m(t)=\sum_x x^2\mathcal P(x, t).
\end{equation} 
The walker shows a ballistic behavior if $m(t)\propto t^2$, while it shows a diffusive behavior if $m(t)\propto t$. Moreover, $m(t)\propto t^\alpha$ with $1<\alpha<2$ indicates a supperdiffusive behavior~\cite{Shlomo1987}. \cref{Fig:smoent_exp} shows the results of {average $m(t)$ with the initial coin states $\ket{H}, \ket{V}, \ket{+}$ and $\ket{L}$ in} a ten-step QW with three different coin sets.  We observe that QWs with three coin sets exhibit supperdiffusive behavior in contrast to diffusive behavior in classical random walk [$m(t)\propto t$]. We simulate 1000 QW with the coin operation randomly selected from the coin set $\{\hat{H}, \hat{\mathds 1}\}$, and the results of $m(t)$ are shown in~\cref{Fig:smoent_exp}(b). Asymptotically, the QW with the coin set $\{\hat{H}, \hat{\mathds 1}\}$ exhibits superdiffusive behavior as $m(t)\propto t^{1.47}$, which  is slower than the ballistic behavior [$m(t)\propto t^2$] in the Hadamard walk \cite{Chandrashekar2008}. However, for simaller $t(t\leq10)$, the QW with most fixed $\bf{C}_T$ spreads faster than the Hardmard walker as shown in the insert of \cref{Fig:smoent_exp}(b). This is the reason why we observe the QW with the coin set $\{\hat{H}, \hat{\mathds 1}\}$ spreads faster than the Hardmard walk in our experiment ($T=10$) as shown in \cref{Fig:smoent_exp}(a). 

% Notably, QW with coin set $\{\hat{H}, \hat{\mathds 1}\}$ spreads faster than the other two, which is attributed to its equivalence to gEQW~\cite{Pires2019SR,Naves2022PRA}.  

In conclusion, we design coin sequences that can rigorously generate maximal entanglement between the coin and the position of the walker in a 1D DTQW with the following three key features to be available at any $T\geq3$, to be independent of initial coin state and to be the simplest for experimental implementation. A comparison of our coin sequence~$\bm{C}_T$ with the other coin sequences is shown in~\cref{App:comparison}, and MEG with our coin sequence significantly outperforms all other proposed coin sequences in the three features mentioned above. {The QW with proposed coin sequences is equivalent to gEQW, which exhibits faster spreading.}

Experimentally, we realize a ten-step 1D DTQW with proposed coin sequences, and we observe the entanglement generation as well as spreading behaviors. The results show a significant enhancement {in terms of the entanglement generation}, which benefits the intermediate quantum information processing that requires maximal qubit-qudit entanglement. 
Moreover, the spreading of probability distributions with our coin sequence {reflects a higher uniformity and faster speedup}, which is favorable and useful in various quantum algorithms and in quantum simulation of biological processes~\cite{2003_kendon_Phys.Rev.A_Decoherence,2007_maloyer_NewJ.Phys._Decoherence,2020_martin-vazquez_Phys.Rev.A_Optimizing}. Our protocol can also be generalized to a $p$-diluted disorder QW~\cite{Geraldi2019PhysRevLett,Geraldi2021PhysRevResearch}, in which transport behavior can be engineered by controlling the probability of coin operations. {As our model is equivalent to gEQW~\cite{Pires2019SR,Naves2022PRA}, a hyperballistic speedup is expected while maintaining the maximal entanglement generation.} 

{
\begin{acknowledgments}
We are grateful to two anonymous referees for providing very useful comments on an earlier version of this article. This work is supported by the Shandong Provincal Natural Science Foundation (Grants No. ZR2019MA001 and No. ZR2020JQ05), the National Natural Science Foundation of China (Grants No. 11974213 and No. 92065112), the National Key R\&D Program of China (Grant No. 2019YFA0308200), Taishan Scholar of Shandong Province (Grant No. tsqn202103013), Shenzhen Fundamental Research Program (Grant No.JCYJ20190806155211142), Shandong University Multidisciplinary Research and Innovation Team of Young Scholars (Grant No. 2020QNQT), and the Higher Education Discipline Innovation Project ('111') (Grant No.B13029). 
\end{acknowledgments}
}
\bibliography{Refer.bib}
%\input{DQWmain_revised.bbl}
% \clearpage

%\appendix
%\input{Appendix}

\appendix

\section{MEG with optimal coin sequence}\label{App:optimalsequence}
\subsection{Fourier analysis of quantum walks}
We use Fourier analysis to analyze the dynamical evolution in DTQW~\cite{Nayak2000,Brun2003, Annabestani2010PRA,Hinarejos2014}, which is defined as 
\begin{equation}
\begin{split}
\ket{k}&=\sum_{x\in\mathbb{Z}}\text{e}^{\text{i}kx}\ket{x}\\
\ket{x}&=\int_{-\pi}^{\pi}\frac{\text{d}k}{2\pi}\text{e}^{-\text{i}kx}\ket{k}.
\end{split}
\end{equation}
With a Fourier transformation, the shift operator in~\cref{eq:shiftoperator} can be expressed in momentum space as
\begin{equation}
\begin{split}
\hat{S}_\text{m}&=\left(\text{e}^{-\text{i} k}\ket0\bra{0}+\text{e}^{\text{i} k}\ket1\bra{1}\right)\otimes\ket{k}\bra{k}\\
&=\left(
\begin{array}{cc}
\text{e}^{-\text{i}k} & 0 \\
0 & \text{e}^{\text{i}k}
\end{array}
\right)\otimes\ket{k}\bra{k} .
\end{split}
\end{equation}
Accordingly, the evolution unitary operator $\hat{U}=\hat{S}(\hat{C}\otimes\hat{\mathds1}_\text{p})$ is expressed by 
\begin{equation}
\begin{split}
\hat{U}_\text{m}&=\hat{S}_\text{m}(\hat{C}\otimes\hat{\mathds1}_\text{p})\\
&=\left(
\begin{array}{cc}
\text{e}^{-\text{i}k} & 0 \\
0 & \text{e}^{\text{i}k}
\end{array}
\right)\hat{C}\otimes\ket{k}\bra{k}\sum_x\ket{x}\bra{x}\\
&=\frac1{4\pi^2}\int \text{d}k\int\text{d}k^\prime\left(
\begin{array}{cc}
\text{e}^{-\text{i}k} & 0 \\
0 & \text{e}^{\text{i}k}
\end{array}
\right)\hat{C}\otimes \sum_x\text{e}^{-\text{i}(k-k^\prime)x}\ket{k}\bra{k^\prime}\\
&=\frac1{2\pi}\int \text{d}k\left(
\begin{array}{cc}
\text{e}^{-\text{i}k} & 0 \\
0 & \text{e}^{\text{i}k}
\end{array}
\right)\hat{C}\otimes\ket{k}\bra{k}\\
&=\int\hat{C}_\text{m}\otimes\frac{\text{d}k}{2\pi}\ket{k}\bra{k},
\end{split}
\end{equation}
where we have used the orthonormalization relation
\begin{equation}
\sum_{x\in\mathbb{Z}}\text{e}^{-\text{i}(k-k^\prime)x}=2\pi\delta(k-k^\prime),
\end{equation}
and we denote $\left(
\begin{array}{cc}
\text{e}^{-\text{i}k} & 0 \\
0 & \text{e}^{\text{i}k}
\end{array}
\right)\hat{C}$ as $\hat{C}_\text{m}$.
Specifically, for the coin operations $\hat{C}=\hat{H}$ and $\hat{C}=\hat{\mathds1}$, we obtain
\begin{equation}
\hat{H}_\text{m}=\frac1{\sqrt{2}}\left(
\begin{array}{cc}
\text{e}^{-\text{i} k} & \text{e}^{-\text{i} k} \\
\text{e}^{\text{i} k} & -\text{e}^{\text{i} k}
\end{array}\right), 
\hat{\mathds1}_\text{m}=\frac1{\sqrt{2}}\left(
\begin{array}{cc}
\text{e}^{-\text{i} k} & 0 \\
0 &  \text{e}^{\text{i} k}
\end{array}\right).
\end{equation}
Thus, the dynamic evolution of the initial state $\ket{\theta,\phi}_\text{c}^\text{in}\otimes\ket0_\text{p}$ is
\begin{equation}
\begin{split}
\ket{\theta,\phi}_\text{f}&=\iint\prod_{t=1}^{T}\hat{C}_{\text{m},t}\otimes\frac{\text{d}k}{2\pi}\ket{k}\bra{k}\cdot\ket{\theta,\phi}_\text{c}^\text{in}\otimes\frac{\text{d}k^\prime}{2\pi}\ket{k^\prime}\\
&=\int\prod_{t=1}^{T}\hat{C}_{\text{m},t}\ket{\theta,\phi}_\text{c}^\text{in}\otimes\frac{\text{d}k}{2\pi}\ket{k}.
\end{split}
\end{equation}
The reduced density matrix of the coin is 
% \begin{equation}
% \begin{aligned}\label{Eq:coin_kspace} 
% \rho_\text{c}^\text{f}
% &=\tr_\text{p}\left(\ket{\theta,\phi}_\text{f}\bra{\theta,\phi}\right)\\
% &=\iint\sum_{x\in\mathbb{Z}}\prod_{t=1}^{T}\hat{C}_{\text{m},t}\ket{\theta,\phi}_\text{c}^\text{in}\bra{\theta,\phi}\hat{C}_{\text{m},t}^\dagger\frac{\text{d}k\text{d}k^\prime}{4\pi^2}\bracket{x}{k}\bracket{k^\prime}{x}\\ \nonumber
% &=\int\frac{\text{d}k}{2\pi}\prod_{t=1}^{T}\hat{C}_{\text{m},t}\ket{\theta,\phi}_\text{c}^\text{in}\bra{\theta,\phi}\hat{C}_{\text{m},t}^\dagger  
% \end{aligned}
% \end{equation}

\begin{align}\label{Eq:coin_kspace} 
\rho_\text{c}^\text{f}
&=\tr_\text{p}\left(\ket{\theta,\phi}_\text{f}\bra{\theta,\phi}\right)\\
&=\iint\sum_{x\in\mathbb{Z}}\prod_{t=1}^{T}\hat{C}_{\text{m},t}\ket{\theta,\phi}_\text{c}^\text{in}\bra{\theta,\phi}\hat{C}_{\text{m},t}^\dagger\frac{\text{d}k\text{d}k^\prime}{4\pi^2}\bracket{x}{k}\bracket{k^\prime}{x}\\ \nonumber
&=\int\frac{\text{d}k}{2\pi}\prod_{t=1}^{T}\hat{C}_{\text{m},t}\ket{\theta,\phi}_\text{c}^\text{in}\bra{\theta,\phi}\hat{C}_{\text{m},t}^\dagger  
\end{align}

\subsection{Superoperator}
We describe the evolution acting on the coin state by a superoperator $\hat{\mathcal L}$
\begin{equation}
\rho_\text{c}^{t+1}=\hat{\mathcal L}\rho_\text{c}^{t}.
\end{equation}
According to~\cref{Eq:coin_kspace}, the one step evolution is
\begin{equation}
\begin{split}
\rho_\text{c}^{t+1}=\int\frac{\text{d} k}{2 \pi} \hat{C}_\text{m}\rho_\text{c}^{t}\hat{C}_\text{m}^\dagger
\end{split}
\end{equation}

An arbitrary coin state $\rho_\text{c}^{t}$ can be described with Pauli matrices $\{\mathds1, \sigma_x, \sigma_y, \sigma_z\}$ by
\begin{equation}
\begin{split}
\rho_\text{c}^{t}&=\alpha_0\mathds1+\alpha_1\sigma_x+\alpha_2\sigma_y+\alpha_3\sigma_z\\
&=\left(
\begin{array}{cc}
\alpha_0+\alpha_3 & \alpha_1-i\alpha_2\\
\alpha_1+i\alpha_2 & \alpha_0-\alpha_3\\
\end{array}\right)\\
&=\left(
\begin{array}{cc}
\frac12+\alpha_3 & \alpha_1-i\alpha_2\\
\alpha_1+i\alpha_2 & \frac12-\alpha_3\\
\end{array}\right),
\end{split}
\end{equation} 
where $\alpha_0=\frac12$ satisfying $\tr\left(\rho\right)=2\alpha_0=1$. Using the affine map approach~\cite{Brun2003}, we represent $\hat{\mathcal L}$ as a matrix acting on the $2\times2$ $\rho_\text{c}^{t}$, which can be further expressed as a four-dimensional column vector $\bm{\alpha}_\text{c}^{t}$~\cite{Annabestani2010PRA}
\begin{equation}
\bm{\alpha}_\text{c}^{t}=
\left(
\begin{array}{c}
\frac12 \\
\alpha_1\\
\alpha_2\\
\alpha_3\\
\end{array}\right),
\end{equation}
where $\alpha_i=\frac12\tr\left(\rho\sigma_i\right)$. Along this spirit, dynamic evolution of the density matrix of the coin can be expressed as 
\begin{equation}\label{Eq:superoperator_dynamics}
\bm{\alpha}_c^\text{f}=\int\frac{\mathrm{d} k}{2 \pi} \prod_{t=1}^{T}{\hat{\mathcal{L}}_{t}} \bm{\alpha}_c^\text{in},
\end{equation} 
where $\bm{\alpha}_c^\text{in}$ and $\bm{\alpha}_c^\text{f}$ correspond to $\rho_c^\text{in}$ and $\rho_c^\text{f}$, respectively.

For $\hat{H}_\text{m}$, we have 
\begin{equation}
\begin{split}
&\hat{H}_\text{m}\rho_\text{c}^{t}\hat{H}_\text{m}^\dagger\\
&=\left(
\begin{array}{cc}
\frac12+\alpha_1 & (\alpha_3+i\alpha_2)e^{-2ik} \\
(\alpha_3-i\alpha_2)e^{2ik} & \frac12-\alpha_1
\end{array}\right)\\
&=\left(
\begin{array}{c}
\frac12 \\
\alpha_3\cos2k+\alpha_2\sin2k\\
-\alpha_2\cos2k+\alpha_3\sin2k\\
\alpha_1\\
\end{array}\right).
\end{split}
\end{equation}
Then we can calculate $\hat{\mathcal{L}}^H$ by
\begin{equation}
\hat{\mathcal{L}}^H\left(
\begin{array}{c}
\frac12 \\
\alpha_1\\
\alpha_2\\
\alpha_3\\
\end{array}\right)=
\left(
\begin{array}{c}
\frac12 \\
\alpha_2\sin2k+\alpha_3\cos2k\\
-\alpha_2\cos2k+\alpha_3\sin2k\\
\alpha_1\\
\end{array}\right),
\end{equation} 
and obtain the expression of $\hat{\mathcal{L}}^H$ as
\begin{equation}
\hat{\mathcal{L}}^H =\left(\begin{array}{cccc}
1 & 0 & 0 & 0 \\
0 & 0 & \sin 2 k & \cos 2 k \\
0 & 0 & -\cos 2 k & \sin 2 k \\
0 & 1 & 0 & 0
\end{array}\right).
\end{equation}
Similarly, the expression of $\hat{\mathcal{L}}^\mathds1$ is 
\begin{equation}
\hat{\mathcal{L}}^\mathds1 =\left(\begin{array}{cccc}
1 & 0 & 0 & 0 \\
0 & \cos 2 k  & -\sin 2 k & 0\\
0 & \sin 2 k & \cos 2 k & 0 \\
0 & 0 & 0 & 1
\end{array}\right).
\end{equation}

\subsection{Optimal coin sequence}
The optimal coin sequence $\bm{C}_T$ maps any initial coin state $\rho_\text{c}^{\text{in}}=\ket{\theta,\phi}_\text{c}^{\text{in}}\bra{\theta,\phi}$ to identity state $\rho_\text{c}^{\text{f}}=1/2$. In the context of superoperator, we have the following definition of optimal coin sequence.  
\begin{definition}[Optimal coin sequence]\label{Df:optimalsequence}
A coin sequence $\bm{C}_T$ is the optimal sequence if its corresponding superoperators can transform $\bm{\alpha}_c^\text{in}$ to $\bm{\alpha}_c^\text{f}$, where
\begin{equation} 
\bm{\alpha}_c^\text{in}=\frac12\left(\begin{array}{c}
1 \\
\cos \phi \sin \theta \\
\sin \phi \sin \theta \\
\cos \theta
\end{array}\right),
\bm{\alpha}_c^\text{f}=\frac12\left(\begin{array}{c}
1 \\
0 \\
0 \\
0
\end{array}\right).
\end{equation}
\end{definition}

We first propose the following optimal coin sequence with one Hadamard operation: 
\begin{theorem}\label{Th:1H}
Given $l_{1,2}\in\mathbb{N}$, the coin sequence with $\bm b=1^{\otimes l_1}01^{\otimes l_2}$
is optimal if $l_1$ and $l_2$ satisfy 
$l_1\ne0$ and 
$l_1 \ne l_2+1$.
\end{theorem}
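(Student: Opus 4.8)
The plan is to recast optimality (\cref{Df:optimalsequence}) as a single linear-algebraic identity and then verify it by direct Fourier integration. First I would exploit the block structure of the superoperators: both $\hat{\mathcal L}^H$ and $\hat{\mathcal L}^\mathds1$ fix the component $\alpha_0=\nicefrac12$ and act on the Bloch vector $\bm r=(\alpha_1,\alpha_2,\alpha_3)^\top$ through their lower-right $3\times3$ orthogonal blocks, which I denote $R_H(k)$ and $R_\mathds1(k)$. Writing $M(k):=\prod_{t=1}^T R_{\bm b_t}(k)$ and $\bar M:=\int_{-\pi}^{\pi}\frac{\dd k}{2\pi}\,M(k)$, \cref{Eq:superoperator_dynamics} shows that $\bm b$ is optimal iff $\bar M\,\bm r=0$ for every initial Bloch vector $\bm r$. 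Since the initial pure states $\ket{\theta,\phi}_\text{c}^\text{in}$ sweep the entire unit sphere and $\bar M$ is linear, this is equivalent to the clean condition $\bar M=0$.

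Next I would collapse the product for the specific string $\bm b=1^{\otimes l_1}01^{\otimes l_2}$. The key observation is that $R_\mathds1(k)=R_z(2k)$ is a rotation about the Bloch $z$-axis, so a run of $l$ identity coins contributes $R_z(2lk)$, while the single Hadamard factorizes as $R_H(k)=R_z(2k)\,B$ with the $k$-independent orthogonal matrix
\[
B=\begin{pmatrix}0&0&1\\0&-1&0\\1&0&0\end{pmatrix}.
\]
Combining the $R_z(2k)$ carried by the Hadamard with the adjacent identity run, the whole product telescopes to $M(k)=R_z\big(2(l_2+1)k\big)\,B\,R_z(2l_1k)$. Setting $m:=l_2+1$ and $n:=l_1$, a direct multiplication gives an $M(k)$ whose entries are the nine products of $\{\cos 2mk,\sin 2mk\}$ with $\{\cos 2nk,\sin 2nk,1\}$, e.g.\ $M_{11}=\sin 2mk\sin 2nk$, $M_{31}=\cos 2nk$, and $M_{33}=0$.

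Finally I would integrate entrywise using $\int_{-\pi}^{\pi}\frac{\dd k}{2\pi}\cos 2jk=\delta_{j,0}$, $\int_{-\pi}^{\pi}\frac{\dd k}{2\pi}\sin 2jk=0$, and the product-to-sum identities, keeping in mind $m\ge1$. All mixed $\sin$/$\cos$ terms average to zero, leaving only $\bar M_{11}=\tfrac12\delta_{m,n}$, $\bar M_{22}=-\tfrac12\delta_{m,n}$, and $\bar M_{31}=\delta_{n,0}$. Hence $\bar M=0$ holds precisely when $\delta_{m,n}=0$ and $\delta_{n,0}=0$, i.e.\ when $l_1\ne l_2+1$ and $l_1\ne0$, which are exactly the stated hypotheses; because these are the only surviving terms, the two conditions are simultaneously necessary and sufficient.

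I expect the genuine content to sit in the first paragraph — justifying that the universal requirement ``maximal entanglement for all $\ket{\theta,\phi}_\text{c}^\text{in}$'' collapses to the single matrix equation $\bar M=0$, and that it suffices to study the integrated Bloch block — together with spotting the factorization $R_H(k)=R_z(2k)\,B$ that makes the telescoping possible. Once these are in place, the remaining trigonometric bookkeeping is routine.
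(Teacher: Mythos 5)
Your proof is correct, and it rests on the same machinery as the paper's: the superoperator/Fourier framework of \cref{App:optimalsequence}, reducing optimality to the vanishing of momentum integrals of products of $\sin(2jk)$ and $\cos(2jk)$. The difference is one of packaging, and it buys you something. The paper applies $(\hat{\mathcal{L}}^{\mathds1})^{l_2}\hat{\mathcal{L}}^H(\hat{\mathcal{L}}^{\mathds1})^{l_1}$ to a general initial Bloch vector, writes out $\alpha_{1,2,3}$ componentwise as trigonometric polynomials in $k$, and checks that each integral vanishes under $l_1\ne0$ and $l_1\ne l_2+1$ --- i.e.\ it proves only the sufficiency that the theorem asserts. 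You instead integrate the $3\times3$ Bloch block itself, and the factorization $R_H(k)=R_z(2k)B$ with $B$ constant lets the product telescope to $R_z\bigl(2(l_2+1)k\bigr)\,B\,R_z(2l_1k)$, whose nine entries are single products of trigonometric functions; the only surviving averages are $\tfrac12\delta_{l_2+1,\,l_1}$ (twice, on the diagonal) and $\delta_{l_1,0}$. That identifies the conditions as necessary as well as sufficient, which the paper neither claims nor shows, and it makes the bookkeeping essentially mechanical. Two small points worth stating explicitly if you write this up: the reduction from ``$\bar M\bm r=0$ for all pure initial states'' to ``$\bar M=0$'' uses that the pure states sweep the whole unit sphere together with linearity (you say this); and the passage to the lower-right block is licensed because the first column \emph{and} first row of both $\hat{\mathcal{L}}^H$ and $\hat{\mathcal{L}}^{\mathds1}$ equal $(1,0,0,0)$, so the affine map carries no inhomogeneous shift into the Bloch components.
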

\begin{proof}
Given an $l\in\mathbb{N}^+$, $\hat{\mathcal{L}}^\mathds1$ has the property of 
\begin{equation}
(\hat{\mathcal{L}}^\mathds1)^{\otimes l}=\left(
\begin{array}{cccc}
1 & 0 & 0 & 0 \\
0 & \cos (2lk) & -\sin (2lk) & 0 \\
0 & \sin (2lk) & \cos (2lk) & 0 \ \\
0 & 0 & 0 & 1
\end{array}\right). 
\end{equation}
Then we calculate $\bm{\alpha}_c^\text{f}$ by
%\begin{widetext}
 \begin{equation}
 \begin{split}
\bm{\alpha}_c^\text{f}&=\int_{-\pi}^{\pi}\frac{\mathrm{d} k}{2\pi}(\hat{\mathcal{L}}^\mathds1)^{\otimes l_2} \hat{\mathcal{L}}^H(\hat{\mathcal{L}}^\mathds1)^{\otimes l_1}\bm{\alpha}_c^\text{in}\\
&=\int_{-\pi}^{\pi}\frac{\mathrm{d} k}{2\pi}\frac12\left(
\begin{array}{c}
1\\
\alpha_1   \\
\alpha_2  \\
\alpha_3
\end{array}\right),
\end{split}
\end{equation}
%\end{widetext}
where 
\begin{equation}
\begin{split}
\alpha_1=&\cos\theta \cos\left[2 (l_2+1)k\right]\\
&+\sin\phi\sin\theta  \sin\left[2 (l_2+1)k\right]   \cos(2 l_1k)\\
&+\cos\phi\sin\theta  \sin\left[2 (l_2+1)k\right]   \sin(2 l_1k)\\
\alpha_2=&\cos\theta \sin\left[2 (l_2+1)k\right]\\
&-\sin\phi\sin\theta \cos\left[2 (l_2+1)k\right]  \cos(2 l_1 k)\\
&-\cos\phi\sin\theta  \cos\left[2 (l_2+1)k\right]  \sin(2 l_1 k)\\
 \alpha_3=&\cos\phi\sin\theta   \cos(2 l_1k) -\sin\phi\sin\theta  \sin(2 l_1 k).\\
\end{split}
\end{equation}
The momentum integrals for $\alpha_3$ is under the condition $l_1\ne 0$, and that of $\alpha_1$ and $\alpha_2$ are 0 as well under the condition $l_1\neq l_2+1$. Then the coin sequence with $\bm b=1^{\otimes l_1}01^{\otimes l_2}$ transforms $\bm{\alpha}_c^\text{in}$ to $\bm{\alpha}_c^\text{f}$ satisfying~\cref{Df:optimalsequence}, and is the optimal coin sequence. 
\end{proof}

We then propose another optimal coin sequence with two Hadamard operations: 
\begin{theorem}\label{Th:2H}
Given $l_{1, 2, 3}\in\mathbb{N}$, the coin sequence with $\bm b=1^{\otimes l_1}01^{\otimes l_2}0 1^{\otimes l_3}$ is the optimal coin sequence if $l_1$, $l_2$ and $l_3$ satisfy $l_1 \ne l_2+1$, $l_1 \ne l_3+1$, $l_2 \pm l_1\ne l_3$ and $l_2+l_3-l_1\neq2$.
\end{theorem}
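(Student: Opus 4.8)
The plan is to follow the superoperator method already used for \cref{Th:1H}, exploiting the fact that the two-Hadamard sequence is obtained by stacking one more ``identity-block plus Hadamard'' stage on top of the one-Hadamard construction. Writing $R(l):=(\hat{\mathcal L}^{\mathds1})^{\otimes l}$, which rotates the $(\alpha_1,\alpha_2)$ plane by $2lk$ while fixing $\alpha_3$, the sequence $\bm b=1^{\otimes l_1}01^{\otimes l_2}01^{\otimes l_3}$ corresponds to $R(l_3)\hat{\mathcal L}^H R(l_2)\hat{\mathcal L}^H R(l_1)$ acting on $\bm{\alpha}_c^{\text{in}}$. First I would observe that $R(l_2)\hat{\mathcal L}^H R(l_1)\bm{\alpha}_c^{\text{in}}$ is exactly the pre-integration vector produced in the proof of \cref{Th:1H}; in particular, a rotation $R(l)$ applied immediately after a Hadamard converts the bare angle $2lk$ into the shifted angle $2(l+1)k$, which is the origin of the ``$+1$'' appearing throughout. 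This lets me reuse that intermediate result as the input to a single additional Hadamard-plus-rotation block, rather than multiplying five matrices from scratch.

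Next I would apply $\hat{\mathcal L}^H$ (which sends $\alpha_1\mapsto\alpha_3$ and rotates the old $(\alpha_2,\alpha_3)$ into the new $(\alpha_1,\alpha_2)$) followed by $R(l_3)$, again absorbing the rotation into a shifted angle $2(l_3+1)k$. The outcome is closed trigonometric expressions for the three components $\alpha_1^{\text{f}},\alpha_2^{\text{f}},\alpha_3^{\text{f}}$, each a sum of products of the harmonics $\cos(2l_1k),\sin(2l_1k),\cos[2(l_2+1)k],\sin[2(l_2+1)k],\cos[2(l_3+1)k],\sin[2(l_3+1)k]$, with coefficients linear in $\cos\theta$, $\sin\theta\cos\phi$ and $\sin\theta\sin\phi$.

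The engine of the proof is then the momentum average $\int_{-\pi}^{\pi}\tfrac{\mathrm dk}{2\pi}$: after expanding every product into pure harmonics $\cos(2mk)$ and $\sin(2mk)$ via product-to-sum identities, orthogonality kills every term except the constant ($m=0$) cosines. By \cref{Df:optimalsequence}, optimality is equivalent to the vanishing of the integrated $\alpha_1^{\text{f}},\alpha_2^{\text{f}},\alpha_3^{\text{f}}$ for \emph{all} $\theta,\phi$, i.e.\ to the vanishing of each coefficient of $\cos\theta$, $\sin\theta\cos\phi$ and $\sin\theta\sin\phi$ separately. A constant mode appears precisely when one of the integer frequencies assembled from $\{l_1,\,l_2+1,\,l_3+1\}$ collapses to zero; the relevant combinations are $l_1-(l_2+1)$ (from $\alpha_3^{\text{f}}$), together with $l_1-(l_3+1)$, $l_2\pm l_1-l_3$ and $l_2+l_3-l_1\pm2$ (from $\alpha_1^{\text{f}}$ and $\alpha_2^{\text{f}}$). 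Forbidding each of these from vanishing is what produces the inequalities in the statement, after which $\bm{\alpha}_c^{\text{f}}=\tfrac12(1,0,0,0)^{\!\top}$ independently of the initial coin, so the sequence is optimal.

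The main obstacle I anticipate is the bookkeeping in this last step: each of $\alpha_1^{\text{f}}$ and $\alpha_2^{\text{f}}$ is a sum of several triple products of sines and cosines, so the same resonant frequency can be reached through more than one term, and I must check that the constant modes do not cancel (or conspire) between them before equating ``no surviving DC term'' with the listed inequalities. Verifying that the enumeration of zero-frequency conditions is exhaustive---and that the stated inequalities are genuinely sufficient to suppress every one of them simultaneously for arbitrary $\theta,\phi$---is the delicate part; the algebra preceding it is routine given the rotation-and-shift structure inherited from \cref{Th:1H}.
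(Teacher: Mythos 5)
Your proposal follows essentially the same route as the paper's proof: the superoperator (affine-map) representation $(\hat{\mathcal L}^{\mathds 1})^{\otimes l_3}\hat{\mathcal L}^{H}(\hat{\mathcal L}^{\mathds 1})^{\otimes l_2}\hat{\mathcal L}^{H}(\hat{\mathcal L}^{\mathds 1})^{\otimes l_1}$ acting on $\bm{\alpha}_c^{\text{in}}$, explicit trigonometric expressions in the harmonics $2l_1k$, $2(l_2+1)k$, $2(l_3+1)k$, and the momentum integral killing all non-resonant terms, with the stated inequalities precisely forbidding the zero-frequency combinations. Your modular reuse of the \cref{Th:1H} intermediate vector and your explicit flagging of the need to check that the resonance enumeration is exhaustive are presentational refinements, not a different argument.
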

\begin{proof}
$\bm{\alpha}_c^\text{f}$ is calculated by
 \begin{equation}
 \begin{split}
\bm{\alpha}_c^\text{f}&=\int_{-\pi}^{\pi}\frac{\mathrm{d} k}{2\pi}(\hat{\mathcal{L}}^\mathds1)^{\otimes l_3} \hat{\mathcal{L}}^H(\hat{\mathcal{L}}^\mathds1)^{\otimes l_2} \hat{\mathcal{L}}^H(\hat{\mathcal{L}}^\mathds1)^{\otimes l_1}\bm{\alpha}_c^\text{in}\\
&=\int_{-\pi}^{\pi}\frac{\mathrm{d} k}{2\pi}\frac12\left(
\begin{array}{c}
1\\
\alpha_1   \\
\alpha_2  \\
\alpha_3
\end{array}\right)
\end{split}
\end{equation}
with
\begin{widetext}
\begin{equation}
\begin{split}
\alpha_1=&\cos\theta \sin\left[2 (l_2+1)k\right]\sin\left[2 (l_3+1)k\right]\\
&-\sin\phi\sin\theta  \bigg(\cos(2l_1k)\cos\left[2 (l_2+1)k\right] \sin\left[2 (l_3+1)k\right]+\sin(2l_1k)\cos\left[2 (l_3+1)k\right]\bigg)\\
&-\cos\phi\sin\theta  \bigg(\sin(2 l_1k)\cos\left[2 (l_2+1)k\right]\sin\left[2 (l_3+1)k\right]-\cos(2l_1k)\cos\left[2 (l_3+1)k\right]\bigg)\\
\alpha_2=&-\cos\theta \sin\left[2 (l_2+1)k\right]\cos\left[2 (l_3+1)k\right]\\
&-\sin\phi\sin\theta  \bigg(\cos(2l_1k)\cos\left[2 (l_2+1)k\right] \cos\left[2 (l_3+1)k\right]+\sin(2l_1k)\sin\left[2 (l_3+1)k\right]\bigg)\\
&-\cos\phi\sin\theta  \bigg(\sin(2 l_1k)\cos\left[2 (l_2+1)k\right]\cos\left[2 (l_3+1)k\right]-\cos(2l_1k)\cos\left[2 (l_3+1)k\right]\bigg)\\
 \alpha_3=&\cos\theta\cos\left[2 (l_2+1)k\right]+\sin\phi\sin\theta   \cos(2 l_1k)\sin\left[2 (l_2+1)k\right] +\cos\phi\sin\theta  \sin(2 l_1 k)\sin\left[2 (l_2+1)k\right].\\
\end{split}
\end{equation}
\end{widetext}
The momentum integrals for $\alpha_3$ is under the condition $l_1\neq l_2+1$, and that of $\alpha_1$ and $\alpha_2$ are 0 under the conditions $l_2\pm l_1\neq l_3, l_2+l_3-l_1\neq2$ and $l_1\neq l_3+1$.
\end{proof}

The coin operation $\hat{C}_1$ at the first step ($T=1$) is equivalent to changing the initial coin state so that we have the following Corollary:   
\begin{corollary}\label{Co:1}
Given $l_1, l_2, l_3 \in\mathbb{N}$ satisfying~\cref{Th:1H} and~\cref{Th:2H}, the coin sequences with  $\bm b=01^{\otimes (l_1-1)}01^{\otimes l_2}$ and $\bm b=01^{\otimes (l_1-1)}01^{\otimes l_2}0 1^{\otimes l_3}$ are the optimal sequences. 
\end{corollary}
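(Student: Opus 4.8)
The plan is to exploit the special role of the very first coin operation. Because the walker begins in the localized, unentangled state $\ket0_\text{w}$, the operation $\hat{C}_1\otimes\hat{\mathds1}_\text{w}$ acts on the product state $\ket{\theta,\phi}_\text{c}^{\text{in}}\otimes\ket0_\text{w}$ simply as $(\hat{C}_1\ket{\theta,\phi}_\text{c}^{\text{in}})\otimes\ket0_\text{w}$, i.e. it leaves a product state whose coin part is again pure, before any entanglement with the walker has been created. Since $\hat{C}_1$ is unitary and hence a bijection of the set of pure coin states, prepending a coin operation to a sequence merely relabels the arbitrary initial coin state. Optimality in the sense of \cref{Df:optimalsequence} is a requirement for \emph{every} initial state, so it is invariant under such a relabeling. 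This reduces the corollary to the sequences of \cref{Th:1H,Th:2H} with their first entry flipped from $1$ ($\hat{\mathds1}$) to $0$ ($\hat{H}$).

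First I would isolate a first-step lemma. Let $\bm b^\star=(b_1,b_2,\dots,b_T)$ be any optimal sequence and let $\tilde{\bm b}=(b_1',b_2,\dots,b_T)$ agree with $\bm b^\star$ in every entry but the first, with first coin operations $\hat{C}_1^\star$ and $\hat{C}_1'$ respectively. Writing $\hat W=\hat U_T\cdots\hat U_2\hat S$ for the fixed unitary implementing the step-$1$ shift together with steps $2,\dots,T$ (identical for both sequences, since $\hat S$ is step-independent), the two final states are
\begin{equation}
\hat W\bigl(\hat{C}_1^\star\ket{\theta,\phi}_\text{c}^{\text{in}}\otimes\ket0_\text{w}\bigr)
\quad\text{and}\quad
\hat W\bigl(\hat{C}_1'\ket{\theta,\phi}_\text{c}^{\text{in}}\otimes\ket0_\text{w}\bigr).
\end{equation}
For an arbitrary pure coin state $\ket\chi$, choosing the initial state $(\hat{C}_1')^{-1}\ket\chi$ makes $\tilde{\bm b}$ produce exactly $\hat W(\ket\chi\otimes\ket0_\text{w})$, whose coin marginal is $\tr_\text{w}\bigl[\hat W(\ket\chi\bra\chi\otimes\ket0_\text{w}\bra0)\hat W^\dagger\bigr]$. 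Because $\bm b^\star$ is optimal, this same marginal equals $\nicefrac{\mathds1}{2}$ for every $\ket\chi$; hence $\tilde{\bm b}$ sends every initial coin state to $\nicefrac{\mathds1}{2}$ and is itself optimal.

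Finally I would apply the lemma twice. Taking $\bm b^\star=1^{\otimes l_1}01^{\otimes l_2}$, which is optimal by \cref{Th:1H} (and where $l_1\ne0$ guarantees $l_1-1\in\mathbb N$, so the target string is well defined), and flipping its first entry gives $\tilde{\bm b}=01^{\otimes(l_1-1)}01^{\otimes l_2}$, establishing the first assertion. Likewise, taking the optimal $\bm b^\star=1^{\otimes l_1}01^{\otimes l_2}01^{\otimes l_3}$ of \cref{Th:2H} and flipping its first entry yields $01^{\otimes(l_1-1)}01^{\otimes l_2}01^{\otimes l_3}$, establishing the second. In both cases the tail $(b_2,\dots,b_T)$ is untouched, so the hypotheses on $l_1,l_2,l_3$ are inherited verbatim.

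The only delicate point I would state carefully is that optimality is a \emph{universal} property, holding for all $\ket{\theta,\phi}_\text{c}^{\text{in}}$, so the argument genuinely requires the bijectivity of $\hat{C}_1$ on pure states rather than a single-state computation; the localized, unentangled initial walker state is precisely what guarantees that the first coin operation acts as this bijection. No momentum integral need be recomputed, since the tail unitary $\hat W$ is shared between $\bm b^\star$ and $\tilde{\bm b}$.
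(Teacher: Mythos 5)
Your proof is correct and follows essentially the same route as the paper, whose entire justification is the one-line observation that the coin operation at the first step is equivalent to a change of the (arbitrary) initial coin state; your ``first-step lemma'' is just a careful spelling-out of that remark, using the bijectivity of $\hat{C}_1$ on pure coin states and the fact that optimality is required for every initial state. The only difference is that you make explicit the minor point that $l_1\ne 0$ is needed for $01^{\otimes(l_1-1)}\cdots$ to be well defined, which the paper leaves implicit.
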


{Note that optimal $\bm{b}$ in~\cref{Th:1H}, \cref{Th:2H} and \cref{Co:1} is not limited to generating the maximal entanglement step $T$, but also leads to MEG in the intermediate steps $t$ if the corresponding sequence $\bm b=b_1b_2\cdots b_t$ fulfills the conditions in~\cref{Th:1H}, \cref{Th:2H} or \cref{Co:1}. We experimentally set the coin sequences $\bm{C}_T$ according to~\cref{Co:1}; they are listed in~\cref{Tb:coinsequences}. }
\begin{table}[h!]
\begin{center}
\begin{tabular}{ |c|c|c| } 
 \hline
~$T$ &  $\bm{C}_T$ with coin set: $\{\hat{H},\hat{\mathds1}\}$ &  $\bm{C}_T$ with coin set: $\{\hat{H},\hat{F}\}$ \\\hline 
3 & $\{\hat{H},\hat{H},\hat{\mathds1}\}$ & $\{\hat{F},\hat{F},\hat{H}\}$\\ 
4  & $\{\hat{H},\hat{H},\hat{\mathds1}, \hat{\mathds1}\}$& $\{\hat{H},\hat{F},\hat{H}, \hat{H}\}$ \\ 
5 & $\{\hat{H},\hat{H},\hat{\mathds1}, \hat{\mathds1}, \hat{\mathds1}\}$ & $\{\hat{H},\hat{F},\hat{H}, \hat{F}, \hat{F}\}$  \\ 
6 & $\{\hat{H},\hat{H},\hat{\mathds1}, \hat{\mathds1}, \hat{\mathds1},\hat{\mathds1}\}$ & $\{\hat{F},\hat{H},\hat{F}, \hat{H}, \hat{H},\hat{F}\}$ \\ 
7 & $\{\hat{H},\hat{H},\hat{\mathds1}, \hat{H},\hat{\mathds1}, \hat{\mathds1},\hat{\mathds1}\}$ & $\{\hat{H},\hat{H},\hat{F}, \hat{H},\hat{F}, \hat{F},\hat{H}\}$  \\ 
8 & $\{\hat{H},\hat{H},\hat{\mathds1}, \hat{H},\hat{\mathds1}, \hat{\mathds1},\hat{\mathds1},\hat{\mathds1}\}$ & $\{\hat{H},\hat{F},\hat{F}, \hat{F},\hat{F}, \hat{H},\hat{F},\hat{F}\}$ \\ 
9 & $\{\hat{H},\hat{H},\hat{\mathds1}, \hat{H},\hat{\mathds1}, \hat{\mathds1},\hat{\mathds1},\hat{\mathds1},\hat{\mathds1}\}$ & $\{\hat{F},\hat{H},\hat{F}, \hat{F},\hat{H}, \hat{H},\hat{H},\hat{H},\hat{F}\}$  \\ 
10 & $\{\hat{H},\hat{H},\hat{\mathds1}, \hat{H},\hat{\mathds1}, \hat{\mathds1},\hat{\mathds1},\hat{\mathds1},\hat{\mathds1},\hat{\mathds1}\}$ & $\{\hat{F},\hat{F},\hat{H}, \hat{H},\hat{H}, \hat{F},\hat{F},\hat{F},\hat{H},\hat{H}\}$ \\ 
 \hline
\end{tabular}
\end{center}
\caption{Experimental settings of coin sequences $\bm{C}_T$ with coin sets $\{\hat{H},\hat{\mathds1}\}$ and $\{\hat{H},\hat{F}\}$.}\label{Tb:coinsequences}
\end{table}

{
\section{Generalized elephant quantum walk}\label{App:gEQW}
In this section, we briefly introduce the gEQW in Ref.~\cite{Pires2019SR,Naves2022PRA}. The main difference between the Hadamard QW and the gEQW is that the shift operator in~\cref{Eq:shiftoperator_gEQW}. The probability of $\Delta_t$ is determined by the discrete version of the $q$-exponential distribution in $[1,T]$ as  
\begin{equation}\label{Eq:qfunction}
\text{Pr}(\Delta_t)=e_q^{\Delta_t}=\tau_t[1-(1-q)\Delta_t]^{1/1-q}
\end{equation}
with $\tau_t$ being a time-dependent normalization factor. The support of function~\cref{Eq:qfunction} is given by
\begin{equation}\label{Eq:support}
\text{supp}[e_q(x)]=
\begin{cases}
[0,\frac{1}{1-q}]&q\leq1\\
[0,\infty]&q>1
\end{cases}.
\end{equation}
When $q=\frac{1}{2}$, we obtain $\Delta_t\leq2$ with probability distribution of $\text{Pr}(\Delta_t=1)=1$ and $\text{Pr}(\Delta_t=2)=0$, which is the Hadamard QW; when $q\to1$, we obtain a decreasing exponential $\text{Pr}(\Delta_t)=\tau_te^{-\Delta_t}$; and when $q\to\infty$, we obtain an uniform distribution $\text{Pr}(\Delta_t)=\frac{1}{T}$. 

The QWs with the optimal coin sequence in~\cref{Co:1} are equivalent to the gEQW, i.e., $\bm b=01^{\otimes (l_1-1)}01^{\otimes l_2}$ corresponds to a two-step gEQW with $\Delta_1=l_1$ and $\Delta_2=l_2+1$, and $\bm b=01^{\otimes (l_1-1)}01^{\otimes l_2}0 1^{\otimes l_3}$ corresponds to a three-step gEQW with $\Delta_1=l_1, \Delta_2=l_2+1$ and $\Delta_3=l_3+1$. In this sense, QWs with the optimal coin sequence in~\cref{Co:1} are gEQW with specific shift-operator configurations, for which the maximal entanglement generation in the asymptotic approach has been reported~\cite{Naves2022PRA}. }

\section{More experimental results}\label{App:processodd}
{The reconstructed $\chi^{\text{exp}}_{\bm{C}_T}$ with the coin sequences in~\cref{Tb:coinsequences} are shown in~\cref{Eq:chi_HI}, according to which we calculate the process fidelities $\mathcal F_{\bm C_3}=0.9976\pm0.0003, \mathcal F_{\bm C_4}=0.9955\pm0.0016, \mathcal F_{\bm C_5}=0.9939\pm0.0011, \mathcal F_{\bm C_6}=0.9951\pm0.0009, \mathcal F_{\bm C_7}=0.9975\pm0.0004, \mathcal F_{\bm C_8}=0.9885\pm0.0015, \mathcal F_{\bm C_9}=0.9968\pm0.0004$ and $\mathcal F_{\bm C_{10}}=0.9980\pm0.0003$.} The geometric representations of the reconstructed $\chi^{\text{exp}}_{\bm{C}_T}$ with coin sets $\{\hat{H}, \hat{\mathds1}\}$ and $\{\hat{H}, \hat{F}\}$ at $T=3,5,7$ and $9$ are shown in~\cref{Fig:Geometricplot2}.\\
The reconstructed $\chi^{\text{exp}}_{\bm{C}_T}$ are as follows:
\begin{figure*}[ht]%4.25
 	\includegraphics[width=0.6\linewidth]{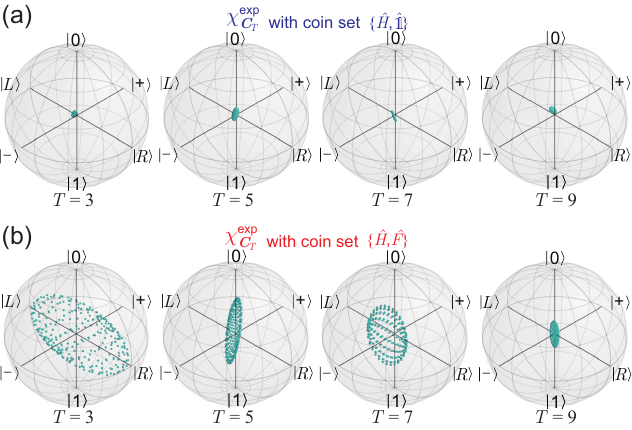} 
 	\caption{(a) Geometric representation of the reconstructed $\chi^{\text{exp}}_{\bm{C}_T}$ with the coin set $\{\hat{H}, \hat{\mathds1}\}$ at $T=3,5,7$ and $9$. (b) Geometric representation of the reconstructed $\chi^{\text{exp}}_{\bm{C}_T}$ with the coin set $\{\hat{H}, \hat{F}\}$  at $T=3,5,7$ and $9$.} 
 	\label{Fig:Geometricplot2}
 \end{figure*}

\onecolumngrid
\begin{equation}
\label{Eq:chi_HI}
\begin{aligned}
\chi^{\text{exp}}_{\bm{C}_3}&=
\left(\begin{array}{cccc}
0.2580-0.0000i &0.0013-0.0003i &-0.0135-0.0118i &0.0050-0.0006i\\
0.0013+0.0003i &0.2320-0.0000i &-0.0084+0.0020i &-0.0163+0.0042i\\
-0.0135+0.0118i &-0.0084-0.0020i &0.2520-0.0000i &0.0074+0.0046i\\
0.0050+0.0006i &-0.0163-0.0042i &0.0074-0.0046i &0.2579-0.0000i\\
\end{array}\right),\\
\chi^{\text{exp}}_{\bm{C}_4}&=
\left(\begin{array}{cccc}
0.2461-0.0000i &-0.0046+0.0056i &0.0101+0.0059i &0.0028-0.0145i\\
-0.0046-0.0056i &0.2518-0.0000i &-0.0082-0.0092i &-0.0327+0.0011i\\
0.0101-0.0059i &-0.0082+0.0092i &0.2408-0.0000i &0.0062-0.0037i\\
0.0028+0.0145i &-0.0327-0.0011i &0.0062+0.0037i &0.2613-0.0000i\\
\end{array}\right),\\
\chi^{\text{exp}}_{\bm{C}_5}&=
\left(\begin{array}{cccc}
0.2441-0.0000i &0.0165+0.0087i &0.0071-0.0204i &0.0133+0.0021i\\
0.0165-0.0087i &0.2524-0.0000i &-0.0093+0.0090i &-0.0366-0.0049i\\
0.0071+0.0204i &-0.0093-0.0090i &0.2405-0.0000i &-0.0001+0.0054i\\
0.0133-0.0021i &-0.0366+0.0049i &-0.0001-0.0054i &0.2630-0.0000i\\
\end{array}\right),\\
\chi^{\text{exp}}_{\bm{C}_6}&=
\left(\begin{array}{cccc}
0.2728-0.0000i &-0.0247-0.0021i &-0.0038-0.0147i &-0.0046-0.0035i\\
-0.0247+0.0021i &0.2594-0.0000i &0.0099+0.0002i &-0.0138+0.0025i\\
-0.0038+0.0147i &0.0099-0.0002i &0.2298-0.0000i &0.0061+0.0188i\\
-0.0046+0.0035i &-0.0138-0.0025i &0.0061-0.0188i &0.2380-0.0000i\\
\end{array}\right),\\
\chi^{\text{exp}}_{\bm{C}_7}&=
\left(\begin{array}{cccc}
0.2333-0.0000i &0.0051-0.0009i &0.0005-0.0085i &-0.0044-0.0086i\\
0.0051+0.0009i &0.2634-0.0000i &0.0049-0.0041i &-0.0191+0.0035i\\
0.0005+0.0085i &0.0049+0.0041i &0.2642-0.0000i &-0.0040+0.0050i\\
-0.0044+0.0086i &-0.0191-0.0035i &-0.0040-0.0050i &0.2392-0.0000i\\
\end{array}\right),\\
\chi^{\text{exp}}_{\bm{C}_8}&=
\left(\begin{array}{cccc}
0.2327-0.0000i &0.0146+0.0030i &0.0267-0.0002i &0.0214-0.0324i\\
0.0146-0.0030i &0.2353-0.0000i &0.0191+0.0321i &0.0211-0.0020i\\
0.0267+0.0002i &0.0191-0.0321i &0.2461-0.0000i &0.0038+0.0123i\\
0.0214+0.0324i &0.0211+0.0020i &0.0038-0.0123i &0.2859-0.0000i\\
\end{array}\right),\\
\chi^{\text{exp}}_{\bm{C}_9}&=
\left(\begin{array}{cccc}
0.2563-0.0000i &-0.0141+0.0002i &-0.0144+0.0135i &0.0103-0.0064i\\
-0.0141-0.0002i &0.2480-0.0000i &-0.0112+0.0124i &0.0110+0.0101i\\
-0.0144-0.0135i &-0.0112-0.0124i &0.2524-0.0000i &-0.0066+0.0063i\\
0.0103+0.0064i &0.0110-0.0101i &-0.0066-0.0063i &0.2433-0.0000i\\
\end{array}\right),\\
\chi^{\text{exp}}_{\bm{C}_{10}}&=
\left(\begin{array}{cccc}
0.2596-0.0000i &-0.0011-0.0055i &0.0026-0.0127i &-0.0037+0.0004i\\
-0.0011+0.0055i &0.2384-0.0000i &-0.0031-0.0117i &-0.0156-0.0060i\\
0.0026+0.0127i &-0.0031+0.0117i &0.2516-0.0000i &-0.0036+0.0065i\\
-0.0037-0.0004i &-0.0156+0.0060i &-0.0036-0.0065i &0.2503-0.0000i\\
\end{array}\right).\\
\end{aligned}
\end{equation}

\section{A comparison of our coin sequence with the other coin sequences.}\label{App:comparison}
In the context of MEG, a comparison of our coin sequence $\bm{C}_T$ with the other coin sequences including the disordered coin sequence~\cite{Vieira2013PhysRevLett}, Parrondo sequences~\cite{Dinesh2022}, three coin sequences proposed by Gratsea \etal~\cite{Gratsea_2020_Generation_of_hybrid, Gratsea_2020_Universal_optimal_coinSequences}, the position-inhomogeneous coin sequence~\cite{Zhang2022PRA} and the gEQW with disordered shift operations~\cite{Pires2019SR,Naves2022PRA} is shown in~\cref{Tb:comparasion}.

\begin{table*}[b!]
\renewcommand\arraystretch{1.5}
\centering
\begin{threeparttable}
    \begin{tabular}{|p{2.5cm}<{\centering}|p{1.5cm}<{\centering}|p{2.1cm}<{\centering}|p{2.1cm}<{\centering}|p{2cm}<{\centering}|p{2.2cm}<{\centering}|p{3cm}<{\centering}|}
%\begin{tabular}{|>{\centering\arraybackslash}m{0.15\linewidth}|L|>{\centering\arraybackslash}m{0.12\linewidth}|>{\centering\arraybackslash}m{0.15\linewidth}|>{\centering\arraybackslash}m{0.12\linewidth}|L|}
    \toprule[1pt]
%    \multicolumn{6}{|c|}{Entanglement generation in QWs} \\
%    \specialrule{0em}{1pt}{1pt}
%    \backslashbox[3cm]{Protocols}{Properties}
     & Steps\tnote{1} & Independence\tnote{2} & Number of operations\tnote{3}  &Technique\tnote{4}  & Figure of merit & Experimental demonstrations\\ 
     \midrule[0.5pt]
    This work & $T\geq 3$ & Yes & $N_c=1$; $N_s=1$ & Annealing and Fourier analysis  & $\mathcal F_{\bm{C}_T}$ & Linear optics (this work) \\ 
     \midrule[0.5pt]
    Vieira \etal~\cite{Vieira2013PhysRevLett} & $T\to\infty$ & Yes & $N_c=2$; $N_s=1$ & Disorder &$\braket{\mathcal S_{\text{E}}}$&Linear optics (Wang \etal~\cite{Wang2018Optica}) \\ 
    \midrule[0.5pt]
     Govind \etal~\cite{Dinesh2022} & $T=3, 5$ & Partially\tnote{a} & $N_c=2$; $N_s=1$& Parrondo sequences &$\braket{\mathcal S_{\text{E}}}$& None\\ 
    \midrule[0.5pt]
    Gratsea \etal ~\cite{Gratsea_2020_Generation_of_hybrid} & $T\geq1$  &No  & Full set of SU(2) coins;$N_s=1$ & Basin hopping &$\braket{\mathcal S_{\text{E}}}$ and Inverse participation ration& Linear optics (Tao \etal~\cite{Tao2021OL}) \\
    \midrule[0.5pt]
    Gratsea \etal ~\cite{Gratsea_2020_Universal_optimal_coinSequences} & None & Partially\tnote{b}  &$N_c=2$; $N_s=1$ & Reinforcement learning &$\braket{\mathcal S_{\text{E}}}$& None\\
     \midrule[0.5pt]
    Zhang \etal ~\cite{Zhang2022PRA} & Odd steps & No  &$N_c=2$; $N_s=1$ & Numerical &$\braket{\mathcal S_{\text{E}}}$& Linear optics (Zhang \etal ~\cite{Zhang2022PRA})\\
    \midrule[0.5pt]
    Pires \etal~\cite{Pires2019SR}; Naves \etal ~\cite{Naves2022PRA}& $T\to\infty$ & Almost  &$N_c=1$; $N_s\to\infty$ & $q$-exponential distribution &$\braket{\mathcal S_{\text{E}}}$& Our demonstration corresponds to a specific case of this protocol\\
    \bottomrule[0.5pt]
    \end{tabular}

    \begin{tablenotes}
    \footnotesize
    \item[1]The steps that the average von Neumann entropy $\braket{\mathcal S_{\text{E}}}=1$ or the process fidelity $\mathcal F_{\bm{C}_T}=1$ is fulfilled.
    \item[2]The independence of the initial coin state $\ket{\theta,\phi}_\text{c}^{\text{in}}=\cos(\nicefrac\theta2)\ket0_\text{c}+\text{e}^{\text{i}\phi}\sin(\nicefrac\theta2)\ket1_\text{c}$.
    \item[3]The number of coin operations $N_c$ in the coin sequence $\bm{C}_T$ and the number of shift operations $N_s$.
    \item[4]Techniques and algorithms to determine the coin set or the coin sequence $\bm{C}_T$.
    \item[a]Independent of $\phi$
    \item[b]Universal sequence: independent of $\theta$ when $\phi=0$; Optimal sequence: dependent on $\theta$ and $\phi$.
    \end{tablenotes}
    \end{threeparttable}
\caption{%
   Comparison of performances of entanglement generation with different coin sequences.}\label{Tb:comparasion}
\end{table*}

\end{document}